\documentclass[letterpaper, 10 pt, conference]{ieeeconf}
\IEEEoverridecommandlockouts
\overrideIEEEmargins

\usepackage{cite}

\usepackage{color}

\usepackage[pdftex]{graphicx}
\graphicspath{{Figs/}{../pdf/}{../jpeg/}}
\DeclareGraphicsExtensions{.pdf,.jpeg,.png}

\usepackage{amsmath}

\usepackage{amssymb}

\usepackage{mathrsfs}

\usepackage{algorithm}

\usepackage{array}

\usepackage{fixltx2e}

\usepackage{stfloats}

\usepackage{url}

\usepackage{booktabs}

\usepackage{enumerate}

\usepackage{setspace}

\usepackage{amsmath}
\newtheorem{theorem}{Theorem}[section]
\newtheorem{lemma}[theorem]{Lemma}



\definecolor{ORANGE}{RGB}{234,122,13}
\definecolor{BLUE}{RGB}{0,112,192}
\definecolor{DRED}{RGB}{192,0,0}
\definecolor{GREEN1}{RGB}{73,156,80}

\begin{document}

\title{Data-Enabled Predictive Control for Grid-Connected Power Converters}

\author{Linbin Huang, Jeremy Coulson, John Lygeros and Florian D{\"o}rfler
\thanks{L. Huang is with the College of Electrical Engineering at Zhejiang University, Hangzhou, China, and the Department of Information Technology and Electrical Engineering at ETH Z{\"u}rich, Switzerland. (Email: huanglb@zju.edu.cn)}
\thanks{J. Coulson, J. Lygeros and F. D{\"o}rfler are with the Department of Information Technology and Electrical Engineering at ETH Z{\"u}rich, Switzerland. (Emails: jcoulson@control.ee.ethz.ch, jlygeros@ethz.ch, dorfler@ethz.ch)}
\thanks{This research was supported by ETH Z{\"u}rich Funds.}}


\maketitle


\begin{abstract}
We apply a novel data-enabled predictive control (DeePC) algorithm in grid-connected power converters to perform safe and optimal control. Rather than a model, the DeePC algorithm solely needs input/output data measured from the unknown system to predict future trajectories. We show that the DeePC can eliminate undesired oscillations in a grid-connected power converter and stabilize an unstable system. However, the DeePC algorithm may suffer from poor scalability when applied in high-order systems. To this end, we present a finite-horizon output-based model predictive control (MPC) for grid-connected power converters, which uses an {\em N}-step auto-regressive-moving-average (ARMA) model for system representation. The ARMA model is identified via an {\em N}-step prediction error method (PEM) in a recursive way. We investigate the connection between the DeePC and the concatenated PEM-MPC method, and then analytically and numerically compare their  closed-loop performance. Moreover, the PEM-MPC is applied in a voltage source converter based HVDC station which is connected to a two-area power system so as to eliminate low-frequency oscillations. All of our results are illustrated with high-fidelity, nonlinear, and noisy simulations.
\end{abstract}



\section{Introduction}

The penetration of power-electronic devices in modern power systems is ever-increasing due to the development of renewable energy, microgrids, high-voltage direct-current (HVDC) transmission systems, etc. \cite{FM-FD-GH-DH-GV:18,olivares2014trends}. This tendency is posing great challenges to power system operations because the dynamics of power converters are substantially different from synchronous generators (SGs). For example, SGs have large rotational rotors which physically determine the output frequencies, while power converters consist of static semiconductor apparatus and have high controllability. 

Conventionally, the control structure of power converters is designed according to engineering experience and the corresponding control gain tuning is based on iterative trial-and-error methods. Also, lots of effort has been put into the modeling of power converters, which provides insights into the system dynamics and criteria for control gain tuning \cite{pogaku2007modeling,harnefors2007modeling,cespedes2014impedance}. However, these approaches heavily rely on rich engineering experience and lack systematicness. In addition, the control structure generally assumes a stiff power grid and may present poor robustness against variable grid conditions. For example, the most widely-used control structure, which consists of a phase-locked loop (PLL) and a current control loop, can become unstable when the power converter is connected to a weak grid with high grid impedance (or equivalently, low short-circuit ratio) \cite{wen2016analysis,suul2016impedance,huang2018adaptive}.

Even though offline design and analysis (based on a nominal model) can be conducted to determine an optimal control parameter set, optimal performance can rarely be achieved during online operation because (i) the real parameters of the power converter (e.g., capacitance and inductances of the {\em{LCL}}  filter) are hard to obtain due to different operation conditions and manufacturing inaccuracy; (ii) sometimes the underlying algorithms for the converter are designed by another manufacturer and are not obtainable, i.e., some part of the converter system is unknown; (iii) the power grid is generally an unknown system from power converter side which significantly affects the dynamic performance; and (iv) the offline design generally employs a constant power grid model (which in most cases is assumed to be an infinite bus) for the power converter, yet the real power grid is variable.
%

Normally, these problems are handled using robust or adaptive methods \cite{weiss2004h,huang2018adaptive}. However, these methods are still model-based, result in complex controllers, and suffer from scalability problems for large and uncertain (or even partially unknown) models -- especially, in grid-connected applications. Inspired by recent advances in machine learning and artificial intelligence, recent control approaches entirely circumvent such model-based solutions in favor of data-driven approaches \cite{lewis2012reinforcement,dean2017sample,boczar2018finite}.

In this paper, we use a novel {\bf{D}}ata-{\bf{e}}nabl{\bf{e}}d {\bf{P}}redictive {\bf{C}}ontrol ({\bf DeePC}) algorithm to compute optimal and safe control policies for grid-connected power converters, which uses real-time feedback to drive the unknown system along a desired (i.e., optimal and constrained) trajectory \cite{coulson2018data}. The DeePC algorithm presented in \cite{coulson2018data} relies on behavioural system approach \cite{markovsky2006exact, willems2005note, markovsky2008data, markovsky2005algorithms}. Instead of using a parametric model for system representation (e.g., state space matrices obtained from system identification), the approach in \cite{markovsky2006exact, willems2005note, markovsky2008data, markovsky2005algorithms} describes the input/output behaviour of the system through the subspace of the signal space in which trajectories live. This signal space of trajectories is spanned by the columns of a data Hankel matrix which results in a non-parametric and data-centric perspective on dynamical control systems. 

The DeePC approach presented here relies on input/output data samples from the converter-internal signals and terminal signals towards the power grid (whose model is unknown from the perspective of the converter's controller), and successfully eliminates undesired oscillations by solving the optimal regulation problem in a receding horizon manner -- with the input/output constraints incorporated and in absence of any system model. However, when applied in large-scale systems, e.g., in the case of power transmision oscillation damping \cite{bjork2019performance,huang2018damping}, the optimal regulation problem in DeePC may suffer from poor scalability due to its high dimension.

To this end, we use a finite-horizon output-based {\em model predictive control} ({\bf MPC}) for grid-connected power converters, wherein the unknown system is represented by an $N$-step {\em auto-regressive-moving-average} (ARMA) model and identified via least-square $N$-step {\em prediction error method} ({\bf PEM}). The PEM can be solved in a recursive way which enables an iterative calculation and possible online implementation. We will show that this concatenated PEM-MPC method is scalable for large-scale unknown systems, and analytically discuss how it is related to DeePC. Namely, DeePC provably outperforms the PEM-MPC method in terms of the cost in the optimal regulation problem (see Lemma \ref{Lemma: PEM-MPC and DeePC}) although this performance gap can be made smaller by appropriate regularizations of the optimal control and system identification problems.
We propose to use the PEM-MPC in a grid-connected voltage source converter (VSC) based HVDC station to attenuate the power system oscillations which are caused by the interactions among multiple synchronous generators \cite{kundur1994power}. All of our results are illustrated with high-fidelity nonlinear simulations.

The remainder of this paper is organized as follows: in Section II we provide an overview for the DeePC approach and apply it to a grid-connected power converter. Section III presents the concatenated PEM-MPC and we discuss how it is related to DeePC. In Section IV we apply the PEM-MPC in a two-area power system which contains one VSC-HVDC station. We conclude the paper in Section V.

\section{Data-Enabled Predictive Control}
\subsection{Preliminaries and Notation}
For an unknown discrete-time LTI system that has $m$ inputs and $p$ outputs, we denote by $u_{i,t} \in \mathbb{R}$ the $i\rm{th}$ input of the system at time $t \in \mathbb{Z}_{ \ge 0}$ and $y_{i,t} \in \mathbb{R}$ the $i\rm{th}$ output at time $t \in \mathbb{Z}_{ \ge 0}$, where $\mathbb{Z}_{ \ge 0}$ is the discrete-time axis. The input vector of the system is denoted by $u_t = {\rm{col}}(u_{1,t},...,u_{m,t}) \in \mathbb{R}^{m}$, and the output vector is denoted by $y_t = {\rm{col}}(y_{1,t},...,y_{p,t}) \in \mathbb{R}^{p}$, where ${\rm{col}}(a_1,...,a_i):=[a_1^{\top}\; \cdots \;a_i^{\top}]^{\top}$. Let $u = {\rm{col}}(u_1,u_2,...)$ and $y = {\rm{col}}(y_1,y_2,...)$ be the input and output trajectories, respectively, whose dimensions can be inferred from the context.

Let $L,T \in \mathbb{Z}_{ \ge 0}$ and $T \ge L$. The trajectory $u \in \mathbb{R}^{mT}$ is \textit{persistently exciting of order L} if the Hankel matrix
\begin{equation}
\mathscr{H}_L(u) := \left[ {\begin{array}{*{20}{c}}
	{{u_1}}&{{u_2}}& \cdots &{{u_{T - L + 1}}}\\
	{{u_2}}&{{u_3}}& \cdots &{{u_{T - L + 2}}}\\
	\vdots & \vdots & \ddots & \vdots \\
	{{u_L}}&{{u_{L + 1}}}& \cdots &{{u_T}}
	\end{array}} \right]		\label{eq:Hankel_L}
\end{equation}
is of full row rank, i.e., the signal $u$ is sufficiently long and sufficiently rich.

Consider the following $n$-order discrete-time LTI system (minimal representation):
\begin{equation}
\left\{ \begin{array}{l}
{x_{t + 1}} = A{x_t} + B{u_t}\\
{y_t} = C{x_t} + D{u_t}
\end{array} \right.\,,		\label{eq:ABCD}
\end{equation}
where $A \in \mathbb{R}^{n \times n}$, $B \in \mathbb{R}^{n \times m}$, $C \in \mathbb{R}^{p \times n}$, $D \in \mathbb{R}^{p \times m}$, and $x_t$ is the state of the system at time $t \in \mathbb{Z}_{ \ge 0}$.

The \textit{lag} of the system in (\ref{eq:ABCD}) is defined by the smallest integer $\ell \in \mathbb{Z}_{ \ge 0}$ so that the observability matrix
\begin{equation*}
\mathscr{O}_{\ell}(A,C) := {\rm{col}}(C,CA,...,CA^{\ell-1})
\end{equation*}
has rank $n$.

Let $T_{\rm ini},N \in \mathbb{Z}_{ \ge 0}$ such that $T \ge (m+1)(T_{\rm ini}+N+n)-1$. Consider an input trajectory $u^{\rm{d}}$ and an output trajectory $y^{\rm{d}}$ (both are of length $T \in \mathbb{Z}_{ \ge 0}$, i.e., $u^{\rm{d}} \in \mathbb{R}^{mT}$ and $y^{\rm{d}} \in \mathbb{R}^{pT}$) measured from the $n$-order unknown system (\ref{eq:ABCD}) such that $u^{\rm{d}}$ is persistently exciting of order $T_{\rm ini} + N + n$. Here we use the superscript d to indicate that these two trajectories are data sets measured from the unknown system. We use $u^{\rm{d}}$ and $y^{\rm{d}}$ to construct the Hankel matrices $\mathscr{H}_{T_{\rm ini}+N}(u^{\rm{d}})$ and $\mathscr{H}_{T_{\rm ini}+N}(y^{\rm{d}})$, which are further partitioned into two parts as
\begin{equation}
\left[ {\begin{array}{*{20}{c}}
	{{U_P}}\\
	{{U_f}}
	\end{array}} \right] := \mathscr{H}_{T_{\rm ini}+N}(u^{\rm{d}})\,,\;\;\left[ {\begin{array}{*{20}{c}}
	{{Y_P}}\\
	{{Y_f}}
	\end{array}} \right] := \mathscr{H}_{T_{\rm ini}+N}(y^{\rm{d}})\,,		\label{eq:partition_Huy}
\end{equation}
where $U_P \in \mathbb{R}^{mT_{\rm ini} \times (T-T_{\rm ini}-N+1)}$, $U_f \in \mathbb{R}^{mN \times (T-T_{\rm ini}-N+1)}$, $Y_P \in \mathbb{R}^{pT_{\rm ini} \times (T-T_{\rm ini}-N+1)}$ and $Y_f \in \mathbb{R}^{pN \times (T-T_{\rm ini}-N+1)}$.

According to the behavioral system theory \cite{willems2005note}, ${\rm{col}}(u_{\rm ini},y_{\rm ini},u,y)$ is a trajectory of (\ref{eq:ABCD}) if and only if there exist $g \in \mathbb{R}^{T-T_{\rm ini}-N+1}$ such that 
\begin{equation}
\left[ {\begin{array}{*{20}{c}}
	{{U_P}}\\
	{{Y_P}}\\
	{{U_f}}\\
	{{Y_f}}
	\end{array}} \right]g = \left[ {\begin{array}{*{20}{c}}
	{{u_{\rm ini}}}\\
	{{y_{\rm ini}}}\\
	u\\
	y
	\end{array}} \right]\,.		\label{eq:Hankel_g}
\end{equation}
The trajectory ${\rm{col}}(u_{\rm ini},y_{\rm ini})$ can be thought of as an initial condition for the trajectory ${\rm{col}}(u_{\rm ini},y_{\rm ini},u,y)$ and ${\rm{col}}(u,y)$ as a future trajectory departing from this initial condition.
If $T_{\rm ini} \ge \ell$, the future output trajectory $y$ is uniquely determined through (\ref{eq:Hankel_g}) for every given input trajectory $u$.


\subsection{Review of DeePC}

Instead of learning a parametric system representation through system identification, the DeePC attempts to learn the system's behaviour and computes optimal control inputs using past data measured from the unknown system. Moreover, input/output constraints can be conveniently incorporated to ensure safety, described as follows.

After using the input/output trajectory ${\rm{col}}(u^{\rm{d}},y^{\rm{d}})$ ($u^{\rm{d}} \in \mathbb{R}^{mT}$ and $y^{\rm{d}} \in \mathbb{R}^{pT}$) to construct the Hankel matrices in (\ref{eq:partition_Huy}), DeePC solves the following optimization problem at every sampling time to get the optimal future control inputs
\begin{equation}
\begin{array}{l}
\mathop {{\rm{min}}}\limits_{g,u \in \mathcal U, y \in \mathcal Y} \;\;{\left\| u \right\|_R^2} + {\left\| {y - r} \right\|_Q^2} + {\lambda _g}{\left\| g \right\|_2^2}\\
s.t.\;\;\left[ {\begin{array}{*{20}{c}}
	{{U_P}}\\
	{{Y_P}}\\
	{{U_f}}\\
	{{Y_f}}
	\end{array}} \right]g = \left[ {\begin{array}{*{20}{c}}
	{{u_{\rm ini}}}\\
	{{y_{\rm ini}}}\\
	u\\
	y
	\end{array}} \right]\,,		
	\tag{DeePC}
	\label{eq:DeePC}
\end{array}
\end{equation}
where $\mathcal U \subseteq \mathbb{R}^{mN}$ and $\mathcal Y \subseteq \mathbb{R}^{pN}$ are the input and output constraint sets, $R \in \mathbb{R}^{mN \times mN}$ is the control cost matrix (positive definite), $Q \in \mathbb{R}^{pN \times pN}$ is the output cost matrix (positive semidefinite), $\lambda_g \in \mathbb{R}_{ \ge 0}$ is the regularization parameter, $r \in \mathbb{R}^{pN}$ is the reference vector for the output signals, $N$ is the prediction horizon, and ${\rm{col}}(u_{\rm ini},y_{\rm ini})$ consists of the most recent input/output trajectory of (\ref{eq:ABCD}).
The norm ${\left\| a \right\|_X^2}$ of the vector $a$ denotes the quadratic form $a^TXa$, and the norm ${\left\| a \right\|_2}$ denotes $\sqrt {{a^T}a} $. 

We note that a two-norm penalty on $g$ is included in the cost function as a regularization term to avoid overfitting. In fact, in the case when stochastic disturbances affect the output measurements, a two-norm regularization on $g$ coincides with two-norm robustness with respect to the noise affecting the output measurements \cite{Coulson2019Regularized}. The DeePC involves solving the optimization problem (\ref{eq:DeePC}) in a receding horizon manner \cite{coulson2018data}, that is, after calculating the optimal control input sequence $u^\star$, we apply $(u_t,...,u_{t+s}) = (u_0^{\star},...,u_s^{\star})$ to the system for some $s \le N-1$ time steps, update ${\rm{col}}(u_{\rm ini},y_{\rm ini})$ to the most recent input/output measurements and then set $t$ to $t+s+1$ for the DeePC algorithm.

%
%

\subsection{Application to a Grid-Connected Power Converter}

\begin{figure}[!t]
	\centering
	\includegraphics[width=2.9in]{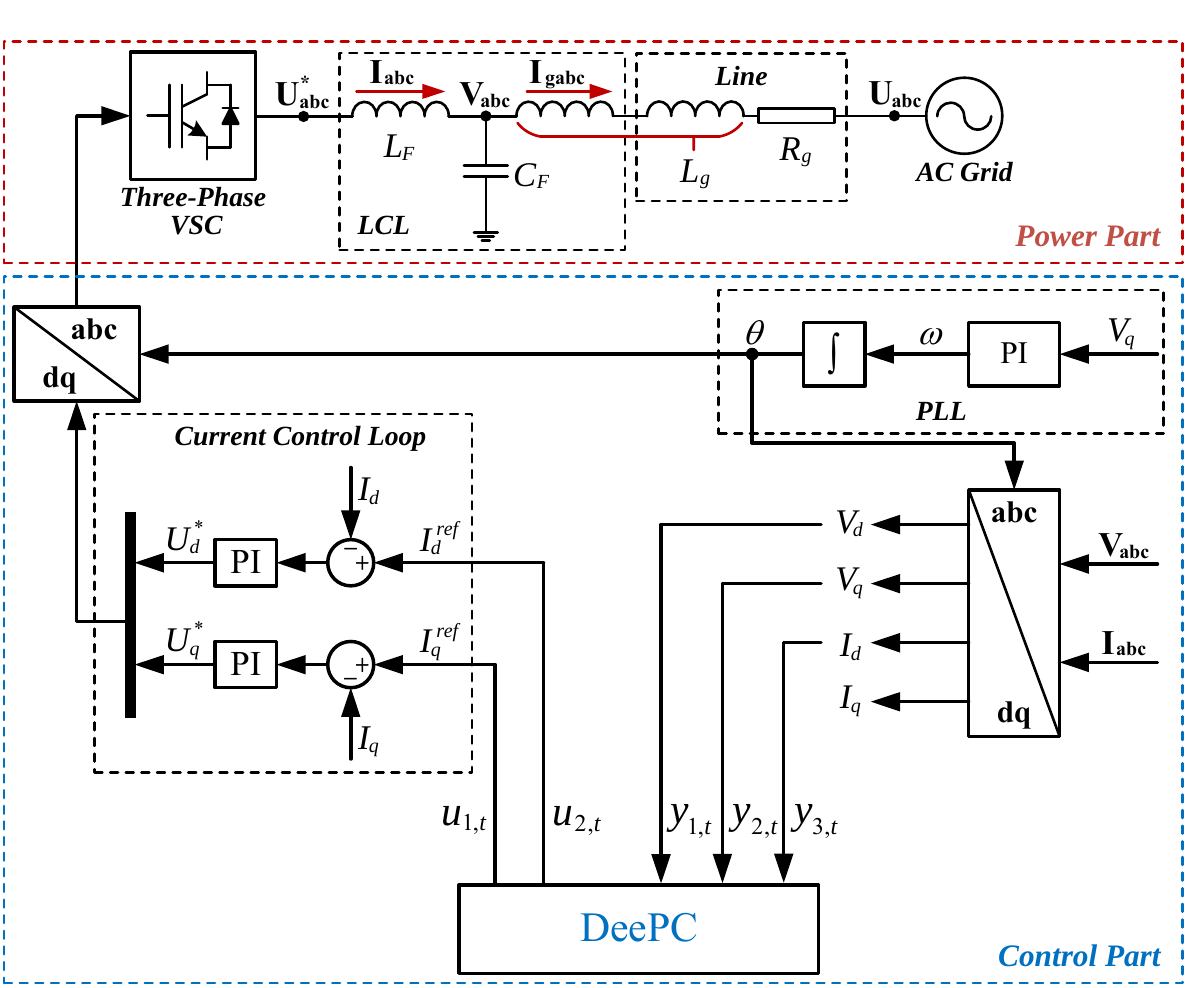}
	\caption{One-line diagram of a grid-connected power converter.}
	\label{Fig_DeePC_Converter_diagram}
\end{figure}

\renewcommand\arraystretch{1.25}
\begin{table}
	\scriptsize
	\centering
	\vspace{1em}
	\caption{Parameters of the Power Converter System}
	\begin{tabular}{|ll|}
		\hline
		\multicolumn{2}{|c|}{Base Values for Per-unit Calculation}										\\
		\hline
		Voltage base value: $U_{\rm{b}} = 380\rm{V}$	&Power base value:	$S_{\rm{b}} = 50\rm{kVA}$	\\
		Frequency base value: $f_{\rm{b}} = 50\rm{Hz}$	&												\\
		\hline
		\multicolumn{2}{|c|}{Parameters of the Power Part (per-unit values)}							\\
		\hline
		Converter-side inductor: $L_F = 0.05$			&\textit{LCL} capacitor:	$C_F = 0.05$		\\
		Grid-side inductor: $L_g = 0.35$				&Grid-side resistor: $R_g = 0.02$				\\
		\hline
		\multicolumn{2}{|c|}{Parameters of the Control Part}											\\
		\hline
		\multicolumn{2}{|l|}{PI gains of the PLL: $103.1({\rm{rad/s}}),5311.5({\rm{rad/s}})$}			\\
		\multicolumn{2}{|l|}{PI gains of the current control loop: $0.3({\rm{p.u.}}),10({\rm{p.u.}})$}	\\
		\multicolumn{2}{|l|}{Control frequency for the digital signal processor: $10\rm{kHz}$}			\\
		\hline
		\multicolumn{2}{|c|}{Parameters of the DeePC}													\\
		\hline
		\multicolumn{2}{|l|}{Length of the initial trajectories: $T_{\rm ini} = 40$}							\\
		\multicolumn{2}{|l|}{Length of the prediction horizon: $N = 30$}								\\
		\multicolumn{2}{|l|}{Length of the data to construct the Hankel matrix: $T = 500$}				\\
		\multicolumn{2}{|l|}{Sampling frequency for the DeePC: $1\rm{kHz}$}								\\
		Control cost: $R = I$							&Output cost: $Q = 400 \times I$				\\
		\multicolumn{2}{|l|}{Reference vector : $r = {\bf{1}}_N \otimes {\rm{col}}(1,0,1)$}				\\
		\multicolumn{2}{|l|}{Constraint set : $\mathcal U = \{u:-2 \times {\bf{1}}_{mN} \le u \le 2 \times {\bf{1}}_{mN}\}$}				\\
		\multicolumn{2}{|l|}{Constraint set : $\mathcal Y = \{u:r - 2 \times {\bf{1}}_{pN} \le y \le r + 2 \times {\bf{1}}_{pN}\}$}				\\
		\multicolumn{2}{|l|}{Regularization parameter: $\lambda_g = 10$}								\\
		\hline
	\end{tabular}		
	\label{table:converter_parameter}
\end{table}

Fig.\ref{Fig_DeePC_Converter_diagram} shows the one-line diagram of a three-phase power converter which is connected to an ac power grid via an \textit{LCL}. The system is nonlinear (the nonlinearity comes from the coordinate transformation), and the order of the system is $n=10$. Here, the ac power grid is modeled as an infinite bus with fixed voltage magnitude $1.0\rm{(p.u.)}$ and frequency $50\rm{Hz}$. The base values for per-unit calculation and the system parameters are given in Table \ref{table:converter_parameter}, wherein the symbol ${\bf{1}}_n \in \mathbb{R}^{n}$ denotes the column vector with all the entries being $1$, and ${\bf{0}}_n \in \mathbb{R}^{n}$ denotes the column vector with all the entries being $0$, $I$ is the identity matrix whose dimension can be inferred from the text, and the Kronecker product of $A$ and $B$ is denoted by $A \otimes B$. The control part of the converter consists of a synchronous reference frame PLL, a current control loop and coordinate transformation blocks \cite{harnefors2007modeling,rocabert2012control}.

The current control loop contains two proportional-integral (PI) regulators to make the converter-side \textit{d}-axis and \textit{q}-axis current components (i.e., $I_d$ and $I_q$ as shown in Fig.\ref{Fig_DeePC_Converter_diagram}) track their references $I_d^{ref}$ and $I_q^{ref}$, which enables fast current limiting under faults and harmonic suppression of the converter-side current.

The PLL is used for grid-synchronization. The input of the PLL is the \textit{q}-axis voltage component (i.e., $V_q$) of the \textit{LCL}'s capacitor, and the output is the angle reference $\theta$ for coordinate transformation.
The $q$-axis voltage component $V_q$ is controlled to be zero in steady state via the PI regulator in the PLL, and therefore the voltage vector is aligned with the {\em d}-axis, that is, the \textit{d}-axis voltage component (i.e., $V_d$) is the magnitude of the capacitor's voltage vector in steady state.

When the converter is connected to a strong grid that features low grid-side impedance, it will present anticipated dynamic performance. 
However, one significant challenge for the converter's operation is that generally the converter does not have any information about the properties of the ac power grid. If the converter is connected to a weak grid that has high grid-side impedance, e.g., $L_g = 0.35\rm{p.u.}$ as used in this section, the PLL will have significant interaction with the current control loop as well as the grid impedance, which may result in instabilities of the system \cite{wen2014impedance,huang2018adaptive}. This kind of small-signal instability features the oscillations of the PLL's output and current/voltage signals, which should be eliminated to ensure the safe operation of the power grid. 

In this section, we use the DeePC to perform optimal control for the converter that is connected to a weak ac grid, which can attenuate the oscillations in the converter and stabilize the system by penalizing the tracking errors in the cost function. The power converter together with the power grid is a black-box system from the view of the DeePC. The DeePC provides control inputs to this black-box system and measures its input/output trajectories with sampling frequency $1\rm{kHz}$ (i.e., sampling time $1\rm{ms}$). 

We choose $I_d^{ref}$ and $I_q^{ref}$ to be the control inputs considering that the current control loop has high response speed and allows fast current limiting. The measured outputs from the black-box system are $V_d$, $V_q$ and $I_d$, which will be controlled to track their references via the DeePC. These output signals contain measurement noise (white noise with noise power: $5.0 \times 10^{-6}$). We note that $I_q$, which corresponds to the reactive current, is not chosen as the measured output for the DeePC because $I_q$ is uniquely determined by the power flow constraint when $V_d$, $V_q$ and $I_d$ equal their reference values in steady state. Since the DeePC has no information about the black-box system, $T_{\rm ini}$ is chosen to be sufficiently large ($T_{\rm ini} = 40$) to meet $T_{\rm ini} \ge \ell$.

Fig.\ref{Fig_DeePC_Converter_curves} plots the time-domain responses of the power converter. When the DeePC is activated at $t=1.0\rm{s}$, the voltage/current oscillations are effectively eliminated. By comparison, the system is unstable without the DeePC (where $I_d^{ref}$ and $I_q^{ref}$ are respectively set as $1.0{\rm{p.u.}}$ and $0$), and there are voltage/current oscillations whose magnitudes are amplified over time. These simulation results verify the effectiveness of using DeePC to perform optimal predictive control for a grid-connected power converter with unknown grid conditions (e.g., grid-side impedance and grid properties) and unknown converter model and inner controls.

\begin{figure}[!t]
	\centering
	\includegraphics[width=2.4in]{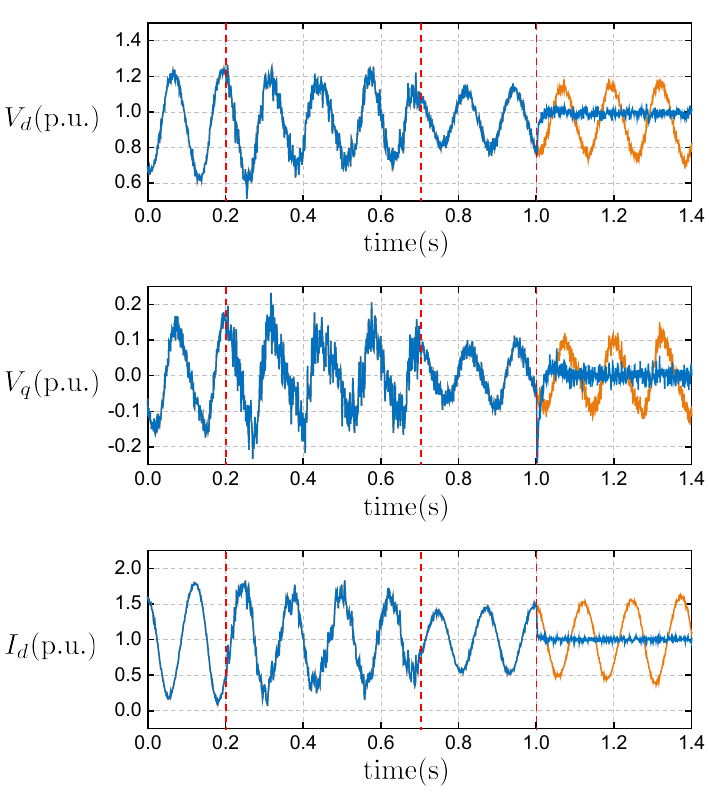}
	\caption{Time-domain responses of the power converter with DeePC. From $t=0.2\rm{s}$ to $0.7\rm{s}$, {\scriptsize{$I_d^{ref}$}} and {\scriptsize{$I_q^{ref}$}} are respectively set as $1.0{\rm{p.u.}}+ \tau_1$ and $\tau_2$ so as to get the input/output data with $u$ persistently exciting, where $\tau_1$ and $\tau_2$ are two different white noise signals (noise power: $1.0 \times 10^{-4}$). The DeePC is activated at $t=1.0\rm{s}$. {\color{ORANGE}{\bf{-----}}} without DeePC; {\color{BLUE}{\bf{-----}}} with DeePC.}
	\label{Fig_DeePC_Converter_curves}
\end{figure}


\section{From DeePC to MPC}
\subsection{Scalability of DeePC}

The DeePC approach provides a safe and optimal solution to the regulation problem by solely using the measured data from the unknown system, and thus allows to stabilize an unknown and unstable system, as illustrated in the previous case study. However, when applied in high-order systems, the DeePC may suffer from poor scalability because the dimension of the decision variable $g$ (which is $T-T_{\rm ini}-N+1$) depends on the length of data $T$ to construct the Hankel matrix. In other words, the optimization problem in (\ref{eq:DeePC}) is of high dimension when choosing a long sequence of data to possibly eliminate the impacts of measurement noise.

To remove $g$ from the constraint and ensure the scalability of the optimization problem in (\ref{eq:DeePC}), we consider a subsequent system identification and model predictive control whose decision variables are $u$ and $y$.

\subsection{Model Predictive Control}
In what follows, we consider the following finite-horizon output-based MPC problem
\begin{equation}
\begin{array}{l}
\mathop {\min }\limits_{u \in \mathcal U,y \in \mathcal Y} \;\;{\left\| u \right\|_R^2} + {\left\| {y - r} \right\|_Q^2}\\
s.t.\;\;y = K\left[ {\begin{array}{*{20}{c}}
	{{u_{\rm ini}}}\\
	{{y_{\rm ini}}}\\
	u
	\end{array}} \right] \,,	
\end{array}	
\tag{MPC}
\label{eq:MPC}
\end{equation}
where the decision variables $u$ and $y$ are the control inputs and measurement outputs over the prediction horizon, and $K \in  {\mathbb{R}^{pN \times (m{T_{\rm ini}} + p{T_{\rm ini}} + mN)}}$ is the {\em $N$-step transition matrix} predicting how future outputs of the system are determined by the initial input/output data and the future inputs. Note that the optimization problem (\ref{eq:MPC}) is solved in a receding horizon manner resulting in an online feedback control.

Since the decision variable $g$ in (\ref{eq:DeePC}) does not appear in (\ref{eq:MPC}), solving (\ref{eq:MPC}) has much less computational burden than solving (\ref{eq:DeePC}), especially when the Hankel matrix has high column number (i.e., $g$ has high dimension). On the other hand, \eqref{eq:MPC} depends on an explicit predictive model given by the transition matrix $K$ in the equality constraint.

\subsection{Prediction Error System Identification}

Observe that the predictive model in \eqref{eq:MPC} is an $N$-step ARMA model for the discrete-time LTI system mapping past inputs and outputs ${\rm col}(u_{\rm ini},y_{\rm ini})$ as well as future inputs $u$ to future outputs $y$. In particular, for the $i \rm{th}$ ($i \in \{1,...,pN\}$) element of $y$, we have
\begin{equation}
y_i = K_i \varphi\,,		\label{eq:MPC_i}
\end{equation}
where $K_i$ is the $i \rm{th}$ row of $K$ and $\varphi = {\rm{col}} (u_{\rm ini},y_{\rm ini},u)$. 

Given past measurements of $y_{i}$ and $\varphi$, the transition matrix $K_i$ can be computed offline through system identification. In the absence of measurement noise, $K_{i}$ can be computed exactly with $m{T_{\rm ini}} + p{T_{\rm ini}} + mN$ linearly independent measurements of $\varphi$ and associated $y_{i}$. 
 However, measurement noise will significantly affect the accuracy of this approach. A standard solution to remedy this problem and to eliminate the effects of the noise is to use a larger data set and apply a least-square $N$-step PEM minimizing 
\begin{equation}
\mathop {\min }\limits_{{K_i}} \;\;\sum\limits_{j = 1}^{{N_{\rm trj}}} {{{({y_{i(j)}} - {K_i}{\varphi_{(j)}})}^2}} \,,	
\tag{PEM}
\label{eq:PEM}
\end{equation}
where $N_{\rm trj}>m{T_{\rm ini}} + p{T_{\rm ini}} + mN$ is the number of measured trajectories, and $y_{i(j)}$ and $\varphi_{(j)}$ belong to the $j \rm{th}$ trajectory.
Indeed, the subsequent combination of PEM and MPC is a standard approach to model-based control that has proved itself in many applications throughout academia and industry \cite{jorgensen2011finite,camacho1999model}.

As an alternative to the batch optimization approach \eqref{eq:PEM} combining all the measured trajectories to solve for $K$ in one step, $K$ can be obtained by adopting the recursive least-square algorithm \cite{ljung1998system}
\begin{equation}
\left\{ {\begin{array}{*{20}{l}}
	{{K_{i(j)}} = {K_{i(j - 1)}} + L_{(j)}^T\left( {{y_{i(j)}} - K_{i(j - 1)}{\varphi_{(j)}}} \right)}\\
	{{L_{\left( j \right)}} = \frac{{{P_{\left( {j - 1} \right)}}{\varphi_{(j)}}}}{{1 + \varphi_{(j)}^T{P_{\left( {j - 1} \right)}}{\varphi_{(j)}}}}}\\
	{{P_{\left( j \right)}} = {P_{\left( {j - 1} \right)}} - \frac{{{P_{\left( {j - 1} \right)}}{\varphi_{(j)}}\varphi_{(j)}^T{P_{\left( {j - 1} \right)}}}}{{1 + \varphi_{(j)}^T{P_{\left( {j - 1} \right)}}{\varphi_{(j)}}}}}
	\end{array}} \right.\,,		\label{eq:MPC_rls}
\end{equation}
where $K_{i(j-1)}$ is the least-square solution for $K_i$ after combining the previous $j-1$ trajectories, and $K_{i(j-1)}$ is updated to $K_{i(j)}$ using the data (i.e., $y_{i(j)}$ and $\varphi_{(j)}$) in the $j\rm{th}$ trajectory. The matrices $L_{(j)} \in \mathbb{R}^{(m{T_{\rm ini}} + p{T_{\rm ini}} + mN) \times 1}$ and $P_{(j)} \in \mathbb{R}^{(m{T_{\rm ini}} + p{T_{\rm ini}} + mN) \times (m{T_{\rm ini}} + p{T_{\rm ini}} + mN)}$ are two intermediate variables updated through the $j\rm{th}$ trajectory in the recursive algorithm. 

The recursive algorithm \eqref{eq:MPC_rls} is not only more scalable for large data sets, but it could possibly also be applied online to adapt the model $K$ used in \eqref{eq:MPC} to a variable environment. Moreover, the size of $K$ is solely related to $T_{\rm ini}$ and $N$ and independent of the size of the data set, which leads to a smaller optimization problem size than (\ref{eq:DeePC}) when applied to high-order systems, considering that the dimension of the decision variable $g$ in (\ref{eq:DeePC}) depends on the length of the data $T$ to construct the Hankel matrix.

\subsection{Relation of DeePC, MPC, and PEM}

In the following, we will relate the subsequent system identification through \eqref{eq:PEM} and model-based control through \eqref{eq:MPC} to the data-driven \eqref{eq:DeePC} strategy.
In a first step, observe that the least-square solution for (\ref{eq:PEM}) can be expressed in closed form as
\begin{equation}
K_i = [y_{i(1)}\;\; \cdots \;\;y_{i(N_{\rm trj})}][\varphi_{(1)}\;\; \cdots \;\;\varphi_{(N_{\rm trj})}]^+\,, \label{eq:ls_Ki}
\end{equation}
where the Moore-Penrose pseudoinverse of the matrix $X$ is denoted by $X^+$.

The transition matrix $K$ can then be obtained as
\begin{equation}
K = [y_{(1)}\;\; \cdots \;\;y_{(N_{\rm trj})}][\varphi_{(1)}\;\; \cdots \;\;\varphi_{(N_{\rm trj})}]^+\,.		\label{eq:ls_K}
\end{equation}

Since every column in the constructed Hankel matrix (\ref{eq:partition_Huy}) is one trajectory measured from the unknown system, $K$ can also be calculated by using the Hankel matrix (\ref{eq:partition_Huy}) by
\begin{equation}
K = {Y_f}{\left[ {\begin{array}{*{20}{c}}
		{{U_P}}\\
		{{Y_P}}\\
		{{U_f}}
		\end{array}} \right]^ + }\,.		\label{eq:ls_K_Hankel}
\end{equation}

By combining (\ref{eq:ls_K_Hankel}) with $y=K\varphi$ and $y=Y_f g$ one obtains
\begin{equation}
g = {\left[ {\begin{array}{*{20}{c}}
		{{U_P}}\\
		{{Y_P}}\\
		{{U_f}}
		\end{array}} \right]^ + }\left[ {\begin{array}{*{20}{c}}
	{{u_{\rm ini}}}\\
	{{y_{\rm ini}}}\\
	u
	\end{array}} \right]\,,		\label{eq:g_n2}
\end{equation}
which equals the solution of the optimization problem
\begin{equation}
\begin{array}{l}
\mathop {\min }\limits_g \;\;{\left\| g \right\|_2^2}\\
s.t.\;\;\left[ {\begin{array}{*{20}{c}}
	{{U_P}}\\
	{{Y_P}}\\
	{{U_f}}
	\end{array}} \right]g = \left[ {\begin{array}{*{20}{c}}
	{{u_{\rm ini}}}\\
	{{y_{\rm ini}}}\\
	u
	\end{array}} \right]\,.
\end{array}
\tag{LN}
\label{eq:LN}
\end{equation}
Observe that the optimization problem \eqref{eq:LN} seeks the least-norm solution to the equality constraint. Thus, we refer to it as the {\em least-norm} ({\bf LN}) problem. As derived above, for Hankel matrix data, the  solution $g'^{\star}$ of \eqref{eq:LN} is related to the solution of the least-squares \eqref{eq:PEM} problem $K^\star$ by $y = Y_{f}g'^{\star} = K^\star \varphi$. It can also be derived using subspace identification methods \cite{markovsky2005algorithms}. 
We summarize this observation below.

\begin{lemma}[PEM and LN]
\label{Lemma: PEM and LN}
Consider the least-norm problem \eqref{eq:LN} with Hankel matrix ${\rm{col}}(U_P,Y_P,U_f)$. Consider the least-square $N$-step prediction error method optimization problem \eqref{eq:PEM}, and assume that its data $y_{i(j)}$ and $\varphi_{(j)}$ is arranged into the Hankel matrix ${\rm{col}}(U_P,Y_P,U_f,Y_f)$. 
Then the solution $K^{\star}$ of \eqref{eq:PEM} and the solution $g'^{\star}$ of \eqref{eq:LN} are related by the equation $K^{\star}\varphi=Y_{f}g'^{\star}$.
\end{lemma}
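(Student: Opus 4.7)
The plan is to show both sides of the claimed identity $K^{\star}\varphi = Y_{f}g'^{\star}$ collapse to the same expression involving the Moore--Penrose pseudoinverse of $\mathrm{col}(U_{P},Y_{P},U_{f})$, and then equate them. First I would recall that the least-squares problem \eqref{eq:PEM} is separable across the rows of $K$, so minimizing $\sum_{j}(y_{i(j)}-K_{i}\varphi_{(j)})^{2}$ independently for each $i$ produces the closed-form minimum-norm solution (\ref{eq:ls_Ki}), namely $K_{i}^{\star}=[y_{i(1)}\,\cdots\,y_{i(N_{\mathrm{trj}})}]\,[\varphi_{(1)}\,\cdots\,\varphi_{(N_{\mathrm{trj}})}]^{+}$. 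Stacking rows yields (\ref{eq:ls_K}).

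Next I would use the hypothesis that the data is arranged into the Hankel block structure: this gives $[\varphi_{(1)}\,\cdots\,\varphi_{(N_{\mathrm{trj}})}] = \mathrm{col}(U_{P},Y_{P},U_{f})$ and $[y_{(1)}\,\cdots\,y_{(N_{\mathrm{trj}})}] = Y_{f}$, so (\ref{eq:ls_K}) specializes to $K^{\star} = Y_{f}\,\mathrm{col}(U_{P},Y_{P},U_{f})^{+}$ as in (\ref{eq:ls_K_Hankel}). On the other side, the \eqref{eq:LN} problem is a linearly-constrained quadratic program, and a standard fact about the Moore--Penrose pseudoinverse is that whenever the equality constraint is consistent, its unique minimum-2-norm solution is $g'^{\star} = \mathrm{col}(U_{P},Y_{P},U_{f})^{+}\,\mathrm{col}(u_{\mathrm{ini}},y_{\mathrm{ini}},u)$; this is exactly (\ref{eq:g_n2}). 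Consistency of the constraint follows from the earlier behavioural-system statement (\ref{eq:Hankel_g}) together with the assumption that $u^{\mathrm{d}}$ is persistently exciting of sufficient order, which guarantees that any trajectory $\mathrm{col}(u_{\mathrm{ini}},y_{\mathrm{ini}},u,y)$ of the system lies in the column span of the Hankel matrix.

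Finally I would plug both closed forms into $K^{\star}\varphi$ and $Y_{f}g'^{\star}$, recalling $\varphi = \mathrm{col}(u_{\mathrm{ini}},y_{\mathrm{ini}},u)$. Both evaluate to $Y_{f}\,\mathrm{col}(U_{P},Y_{P},U_{f})^{+}\,\varphi$, so the identity holds. This yields the claim without requiring uniqueness of $K^{\star}$ as a matrix: even if $\mathrm{col}(U_{P},Y_{P},U_{f})$ is rank deficient and other minimizers of \eqref{eq:PEM} exist, the pseudoinverse-based minimizer is the one matched by the pseudoinverse-based $g'^{\star}$ of \eqref{eq:LN}, and the predicted output $K^{\star}\varphi$ is invariant across minimizers whenever $\varphi$ lies in the row space of the data matrix.

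The only substantive obstacle I anticipate is clarifying the hypotheses under which the pseudoinverse formulas deliver the minimizers of \eqref{eq:PEM} and \eqref{eq:LN} respectively, and establishing consistency of the LN constraint; both follow cleanly from persistency of excitation combined with standard linear algebra, so the argument is essentially a direct substitution once the setup is in place.
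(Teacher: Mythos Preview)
Your proposal is correct and follows essentially the same route as the paper: the paper's argument (given in the text immediately preceding the lemma) derives the closed-form pseudoinverse solutions (\ref{eq:ls_Ki})--(\ref{eq:ls_K_Hankel}) for \eqref{eq:PEM}, identifies the \eqref{eq:LN} minimizer with (\ref{eq:g_n2}), and then equates $K^{\star}\varphi$ and $Y_{f}g'^{\star}$ by substitution. Your write-up is in fact more careful about the hypotheses (consistency of the LN constraint, non-uniqueness of $K^{\star}$ under rank deficiency) than the paper, but the core argument is identical.
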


Observe that if the \eqref{eq:LN} solution $g'^\star$ is used as predictive model (equality constraint) of the \eqref{eq:MPC} problem by setting $y = Y_{f}g'^{\star}$ (where $g'^{\star}$ is a function of $u$), then the subsequent concatenation of the \eqref{eq:MPC} and the \eqref{eq:LN} (or equivalently \eqref{eq:PEM} by Lemma \ref{Lemma: PEM and LN}) optimization problems reads as follows:
\begin{equation}
\begin{array}{l}
\mathop {\min }\limits_{u' \in \mathcal U,y' \in \mathcal Y} \;\;{\left\| u' \right\|_R^2} + {\left\| {y' - r} \right\|_Q^2}\\
s.t.\;\;u' = {U_f}g'^\star,\;y' = {Y_f}g'^\star,\\
{\rm{where}}\;\;g'^\star = \mathop {\arg \min }\limits_{g'} \;\;{\left\| g' \right\|_2^2}\\
\;\;\;\;\;\;\;\;\;\;\;s.t.\;\;\left[ {\begin{array}{*{20}{c}}
	{{U_P}}\\
	{{Y_P}}\\
	{{U_f}}
	\end{array}} \right]g' = \left[ {\begin{array}{*{20}{c}}
	{{u_{\rm ini}}}\\
	{{y_{\rm ini}}}\\
	u'
	\end{array}} \right]\,.		
\end{array}
\tag{PEM-MPC}
\label{eq:PEM-MPC}
\end{equation}

Note that $g'^\star$ depends on the decision variable $u'$ in \eqref{eq:PEM-MPC}. The inner problem of \eqref{eq:PEM-MPC} is the system identification step through \eqref{eq:LN} (or equivalently \eqref{eq:PEM} by Lemma \ref{Lemma: PEM and LN}).  The outer problem on the other hand is identical to the \eqref{eq:MPC} problem. Let 
\begin{equation}
C(u,y,g) = {\left\| {{u}} \right\|_R^2} + {\left\| {{y} - r} \right\|_Q^2} + {\lambda _g}{\left\| {{g}} \right\|_2^2}
\label{eq:CombinedCost}
\end{equation}
be the combined cost taking into account both system performance (i.e., ${\left\| {{u}} \right\|_R^2} + {\left\| {{y} - r} \right\|_Q^2}$) and the complexity of $g$ (i.e., ${\lambda _g}{\left\| {{g}} \right\|_2^2}$). This is in fact the true cost we wish to minimize when noise is present in the system as large entries of $g$ result in overfitting of the noisy trajectories in the Hankel matrix in \eqref{eq:Hankel_g} \cite{coulson2018data}. Since this cost appears in \eqref{eq:DeePC}, we can compare directly the performance of \eqref{eq:PEM-MPC} and \eqref{eq:DeePC} with respect to this cost. We present the comparison below.


\begin{lemma}[PEM-MPC and DeePC]
\label{Lemma: PEM-MPC and DeePC}
Consider the optimal solution $(u'^\star,y'^\star,g'^\star)$ of the concatenated \eqref{eq:PEM-MPC} problem and the optimal solution $(u^\star,y^\star,g^\star)$ of the \eqref{eq:DeePC} problem. It holds that 
\begin{equation*}
C(u^{\star},y^{\star},g^{\star}) \le C(u'^{\star},y'^{\star},g'^{\star})	\,.
\end{equation*}
That is, \eqref{eq:DeePC} achieves a cost less or equal to the cost of the concatenated \eqref{eq:PEM-MPC} with respect to \eqref{eq:CombinedCost}.
\end{lemma}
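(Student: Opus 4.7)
The plan is to show that the optimal triple produced by \eqref{eq:PEM-MPC} is a feasible point of \eqref{eq:DeePC}, and then invoke optimality of the \eqref{eq:DeePC} solution with respect to the common cost $C(u,y,g)$. Since both problems minimize (subsets of) the same cost, once feasibility is established the inequality follows immediately.

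First I would verify feasibility of $(u'^{\star},y'^{\star},g'^{\star})$ for \eqref{eq:DeePC}. The input/output constraints $u'^{\star}\in\mathcal{U}$ and $y'^{\star}\in\mathcal{Y}$ are inherited directly from the outer problem of \eqref{eq:PEM-MPC}. For the Hankel equality constraint of \eqref{eq:DeePC}, the first three block rows $U_{P}g'^{\star}=u_{\rm ini}$, $Y_{P}g'^{\star}=y_{\rm ini}$, and $U_{f}g'^{\star}=u'^{\star}$ hold by definition of $g'^{\star}$ as the minimizer of the inner \eqref{eq:LN} problem (and also by the explicit constraint $u'^{\star}=U_{f}g'^{\star}$ appearing in the outer problem), while the fourth row $Y_{f}g'^{\star}=y'^{\star}$ is exactly the other equality constraint of the outer problem. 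Hence $(u'^{\star},y'^{\star},g'^{\star})$ satisfies every constraint of \eqref{eq:DeePC}.

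Second, I would conclude the inequality. Since $(u^{\star},y^{\star},g^{\star})$ is by assumption a global minimizer of $C(u,y,g)$ over the feasible set of \eqref{eq:DeePC}, and since $(u'^{\star},y'^{\star},g'^{\star})$ belongs to that feasible set by the previous step, we get
\begin{equation*}
C(u^{\star},y^{\star},g^{\star}) \;\le\; C(u'^{\star},y'^{\star},g'^{\star}),
\end{equation*}
which is the claim. Note that the inequality can be strict because \eqref{eq:PEM-MPC} constrains $g'^{\star}$ to be the unique minimum-norm solution of the inner least-norm system parameterised by $u'$, whereas \eqref{eq:DeePC} is free to trade off the three cost terms jointly in $(u,y,g)$ and may therefore choose a different (generally non-minimum-norm) $g^{\star}$ that lowers the composite objective.

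There is no real obstacle here: the argument is almost entirely structural. The only point that deserves care is confirming that the inner minimum-norm solution $g'^{\star}$ satisfies $U_{f}g'^{\star}=u'^{\star}$ for the specific $u'^{\star}$ returned by the outer problem (so that the fourth-row feasibility check in \eqref{eq:DeePC} uses a consistent $g'^{\star}$); this is immediate from the bi-level structure of \eqref{eq:PEM-MPC}, where $g'^{\star}$ is defined as a function of the outer decision variable $u'$ and evaluated at $u'=u'^{\star}$.
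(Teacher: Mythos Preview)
Your proposal is correct and follows essentially the same approach as the paper: show that the optimal triple $(u'^{\star},y'^{\star},g'^{\star})$ of \eqref{eq:PEM-MPC} is a feasible point of \eqref{eq:DeePC}, and then invoke optimality of the \eqref{eq:DeePC} solution with respect to the shared cost $C$. Your write-up is in fact more explicit than the paper's, spelling out each block row of the Hankel constraint and the role of the bi-level structure, but the underlying idea is identical.
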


\begin{proof}
Observe that any feasible point $(u',y',g'^\star)$ of~\eqref{eq:PEM-MPC} is also a feasible point of~\eqref{eq:DeePC}. Since $g'^\star$ is a particular $g$ that satisfies the constraints of~\eqref{eq:DeePC} then the feasible set of~\eqref{eq:PEM-MPC} is a subset of~\eqref{eq:DeePC}.  
Hence, it holds that $C(u^{\star},y^{\star},g^{\star}) \le C(u'^{\star},y'^{\star},g'^{\star})$.
\end{proof}

In other words, the DeePC presents better performance than the MPC formulated in (\ref{eq:MPC}) if $K$ is obtained by (\ref{eq:PEM}). On the other hand, the MPC problem in (\ref{eq:MPC}) has the advantage that it doesn't contain the decision variable $g$ and therefore has lower computational burden, which make it scalable to high-order systems. Furthermore, a recursive algorithm can be used in PEM-MPC to obtain the $N$-step transition matrix $K$ online, which enables the implementation for digital signal processors. In the next section we numerically observe that the gap between (\ref{eq:DeePC}) and (\ref{eq:PEM-MPC}) presented in Lemma \ref{Lemma: PEM-MPC and DeePC} can actually be made smaller by an appropriate choice of regularization $\lambda_g$.

\subsection{Comparative Case Studies} 

Fig.\ref{Fig_Comparative_results} displays the time-domain responses of the power converter when the PEM-MPC is used, with the other settings remaining the same as those in Fig.\ref{Fig_DeePC_Converter_diagram}. Before $t=0\rm{s}$, $I_d^{ref}$ and $I_q^{ref}$ were respectively set as $1.0{\rm{p.u.}}+ \tau_1$ and $\tau_2$ for $30\rm{s}$, where $\tau_1$ and $\tau_2$ are two different white noise signals (noise power: $1.0 \times 10^{-4}$). During this process, the power converter is critically stable with a grid-side inductor $L_g = 0.34\rm{p.u.}$, and the other parameters are the same as those in Table \ref{table:converter_parameter}.

\begin{figure}[!t]
	\centering
	\includegraphics[width=2.5in]{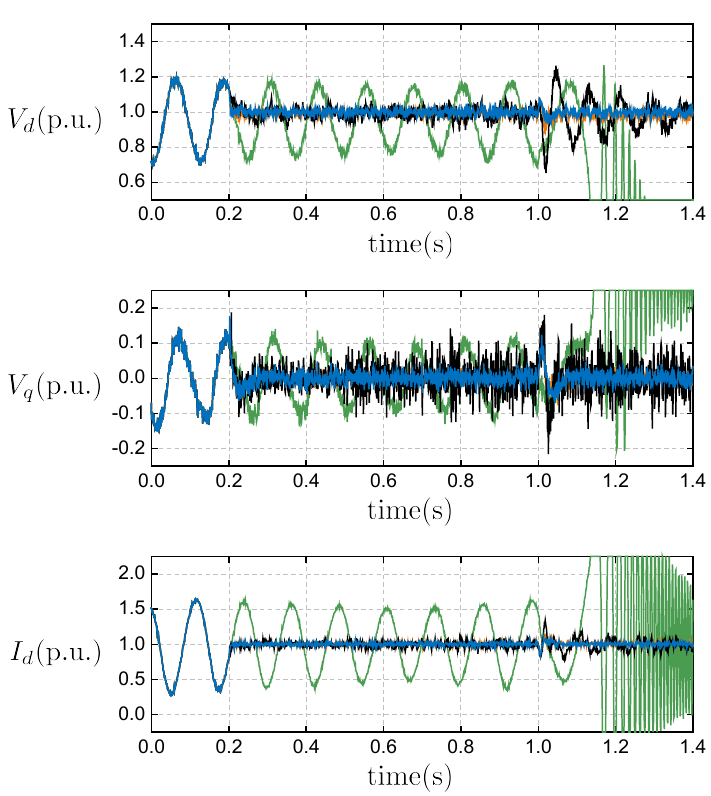}
	\caption{Time-domain responses of the power converter with different algorithms. The DeePC/PEM-MPC is activated at $t=0.2\rm{s}$. The grid-side inductance $L_g$ is changed from $0.34\rm{p.u.}$ to $0.35\rm{p.u.}$ at $t=0.7\rm{s}$, and to $0.5\rm{p.u.}$ at $1.0\rm{s}$. 
		{\color{BLUE}{\bf{-----}}} PEM-MPC; 
		{\color{ORANGE}{\bf{-----}}} DeePC ($T=500$);
		{\color{black}{\bf{-----}}} DeePC ($T=330$);
		{\color{GREEN1}{\bf{-----}}} {\scriptsize{$I_d^{ref}=1.0{\rm{p.u.}}, I_q^{ref}=0$}}.}
	\label{Fig_Comparative_results}
\end{figure}

By using recursive least-square algorithm, the $N$-step transition matrix $K$ was calculated iteratively with measurements spaced by $20\rm{ms}$, that is, $K$ has been obtained during the converter's operation by recursively solving the PEM problem in (\ref{eq:PEM}) with $1500$ trajectories before $t=0\rm{s}$.

At $t=0.2\rm{s}$, the PEM-MPC is activated, which effectively attenuates the voltage and current oscillations and stabilizes the system. Then, we test the robustness of the PEM-MPC by changing the grid-side inductance $L_g$ from $0.34\rm{p.u.}$ to $0.35\rm{p.u.}$ at $t=0.7\rm{s}$, and to $0.5\rm{p.u.}$ at $1.0\rm{s}$. It can be seen that the system is stable under these two disturbances, and the current and voltage signals track the references with fast dynamics, that is, the PEM-MPC shows robustness in terms of parameter changes in the black-box system. Here $K$ is not updated during the two disturbances to test how an inaccurate model in $K$ affects the performance of the PEM-MPC.

For comparison, Fig.\ref{Fig_Comparative_results} also plots the time-domain responses of the power converter when the DeePC is applied. By choosing $T=500$, the DeePC also effectively eliminates the voltage and current oscillations when activated at $t=0.2\rm{s}$, and presents robustness when $L_g$ is changed from $0.34\rm{p.u.}$ to $0.35\rm{p.u.}$ at $t=0.7\rm{s}$, and to $0.5\rm{p.u.}$ at $1.0\rm{s}$. Note the Hankel matrix is obtained from the system with $L_g=0.34\rm{p.u.}$ and is not updated during the disturbances.

When choosing $T=330$ in the DeePC, the oscillations are obviously eliminated as well after the DeePC is activated. However, the voltage and current signals are oscillating after $L_g$ is changed from $0.35\rm{p.u.}$ to $0.5\rm{p.u.}$ at $1.0\rm{s}$. This is because the trajectory ${\rm{col}}(u^{\rm{d}},y^{\rm{d}})$ in this case contains less information of the system than that with $T=500$, and thus the prediction is more strongly affected by the measurement noise when solving the DeePC. Choosing a larger $T$ is a convenient way to resolve this problem, yet it results in a higher dimension of $g$ and thus higher computational burden in solving (\ref{eq:DeePC}), which may not be scalable to high-$\ell$ systems as discussed before. 

Fig.\ref{Fig_Cost_T} plots the relationship between $T$ and the time-domain cost (from $t=0.2\rm{s}$ to $t=1.4\rm{s}$) to illustrate how different values of $T$ affect the system performance, where the time-domain cost is $\left\| {{u_{\rm time}}} \right\|_R^2 + \left\| {{y_{\rm time}} - {r_{\rm time}}} \right\|_Q^2$ ($u_{\rm time}$ and $y_{\rm time}$ are the input and output trajectories measured from the system, and $r_{\rm time}$ is the reference vector for the outputs). It is shown that the time-domain cost sharply decreases with the increase of $T$ from $320$ to $400$, and remain nearly the same (or slightly increases) if further increasing $T$. 

\begin{figure}[!t]
	\centering
	\includegraphics[width=2.1in]{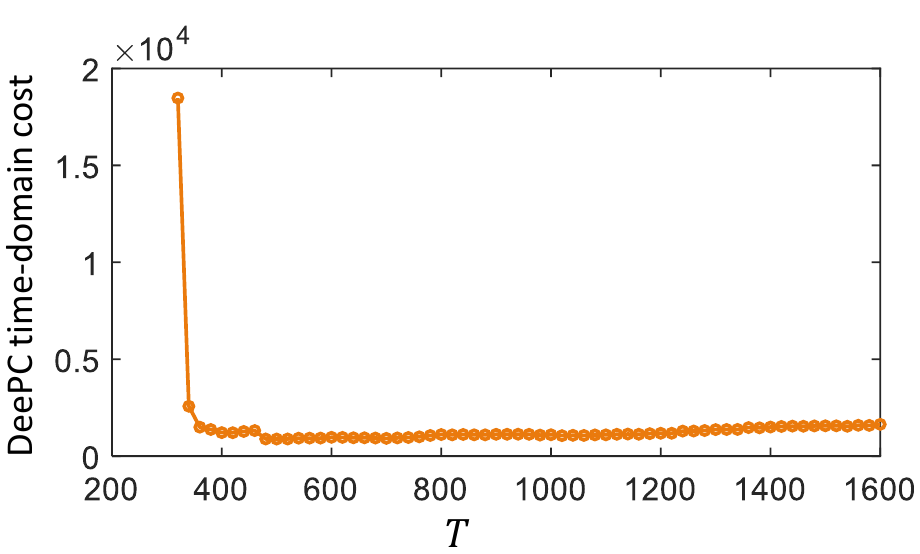}
	\caption{Variations of the time-domain cost with different values of $T$.}
	\label{Fig_Cost_T}
\end{figure}

For comparison, the converter's responses without the DeePC or the PEM-MPC are plotted by the green lines in Fig.\ref{Fig_Comparative_results}. In this case, the grid-side inductance $L_g$ is also changed from $0.34\rm{p.u.}$ to $0.35\rm{p.u.}$ at $t=0.7\rm{s}$, and to $0.5\rm{p.u.}$ at $1.0\rm{s}$. It can be seen that the system is unstable with the voltage and current signals keeping oscillating, which endangers the power system operation.

\begin{figure}[!t]
	\centering
	\includegraphics[width=2.8in]{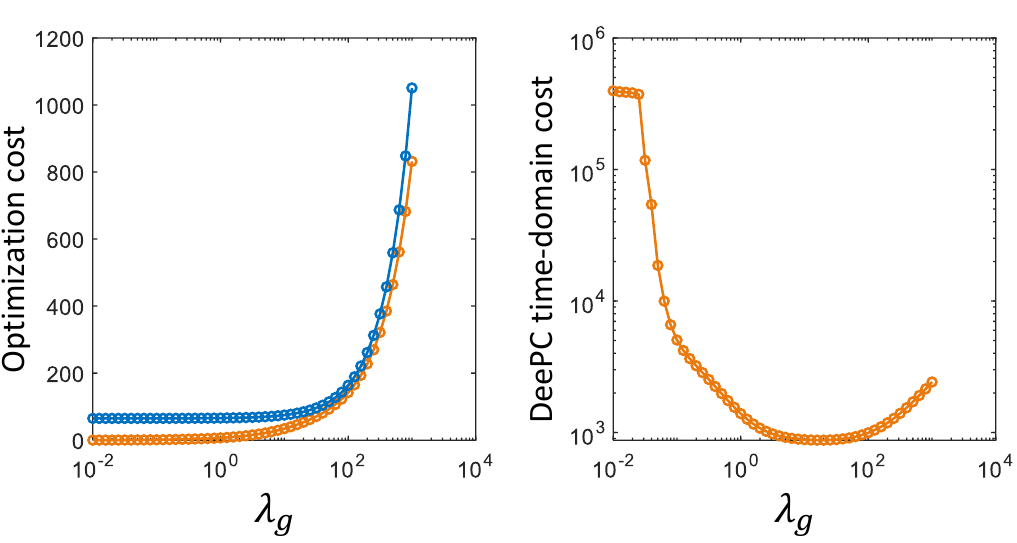}
	\caption{Variations of the optimization cost and the time-domain cost with different values of $\lambda_g$. {\color{ORANGE}{\bf{-----}}} DeePC; {\color{BLUE}{\bf{-----}}} PEM-MPC;}
	\label{Fig_Cost_lambda_g}
\end{figure}

Fig.\ref{Fig_Cost_lambda_g} displays how $\lambda_g$ affects the optimization cost of the system (setting $T=500$). The optimization costs of DeePC and PEM-MPC are respectively obtained by solving (\ref{eq:DeePC}) and (\ref{eq:PEM-MPC}) at $t=0.2\rm{s}$, and Fig.\ref{Fig_Cost_lambda_g} shows that DeePC outperforms PEM-MPC under any $\lambda_g$, which is consistent with Lemma \ref{Lemma: PEM-MPC and DeePC}. In addition, Fig.\ref{Fig_Cost_lambda_g} also plots how  $\lambda_g$ affects the time-domain cost (from $t=0.2\rm{s}$ to $t=1.4\rm{s}$) of the DeePC, and shows that the time-domain cost is dramatically reduced when increasing $\lambda_g$ from $0.01$ to $20$, and it slightly increases if further increasing $\lambda_g$.

\section{Application of PEM-MPC in Large-Scale Systems}

To illustrate the effectiveness of the PEM-MPC in large-scale systems, we now provide a detailed simulation study based on a {\em{nonlinear}} model of the three-phase two-area test system (the order of the system is $n=90$) in Fig.\ref{Fig_Two_Area_System}. This test system consists of four SGs and one VSC-HVDC station. The VSC-HVDC station is a large-capacity power converter with an {\em{LCL}} filter, which in our case, employs the control scheme given in Fig.\ref{Fig_DeePC_Converter_diagram}. The base values for this test system are: $f_{\rm{b}}=50\rm{Hz}$, $S_{\rm{b}}=350\rm{MVA}$ and $U_{\rm{b}}=220\rm{kV}$ (line to line). In the VSC-HVDC station, the grid-side inductor of the {\em{LCL}} filter is chosen as $L_g = 0.05{\rm{p.u.}}$, and the other parameters are the same as those in Table \ref{table:converter_parameter} unless otherwise specified. The parameters of the SGs and the power grid are chosen as the same as those in \cite{huang2018damping}. 

\begin{figure}[!t]
	\centering
	\includegraphics[width=3in]{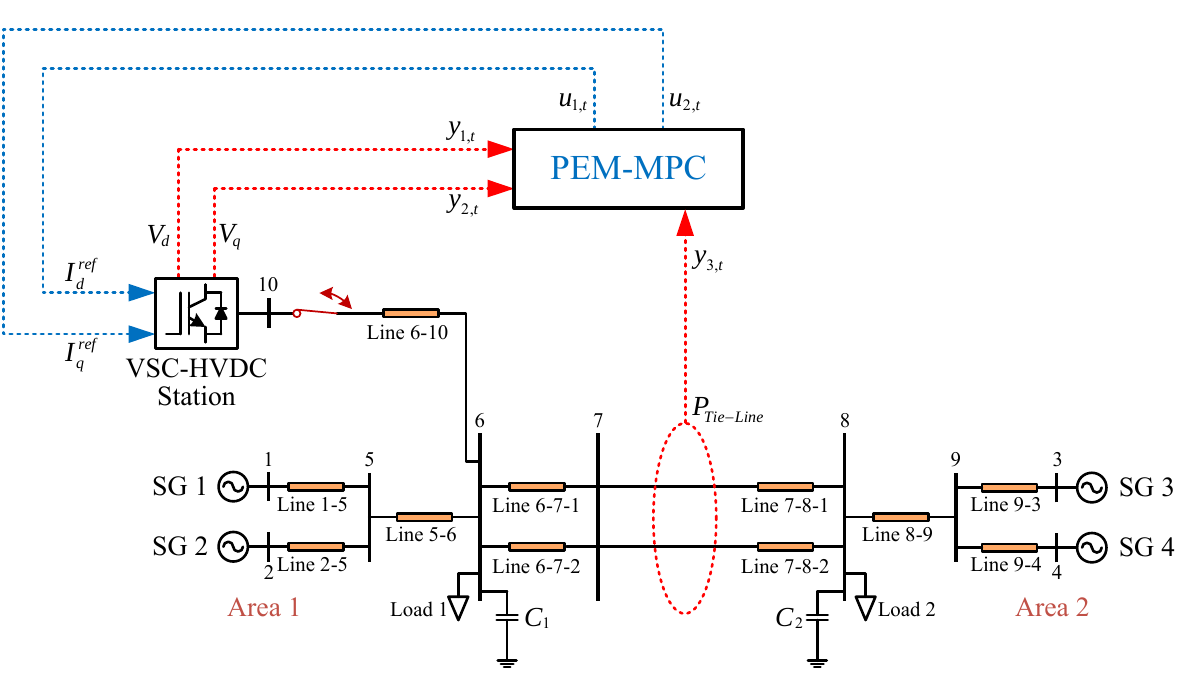}
	\caption{One-line diagram of a three-phase two-area test system with integration of a VSC-HVDC station.}
	\label{Fig_Two_Area_System}
\end{figure}

This test system has one pair of weakly-damped inter-area modes and presents low-frequency oscillations, as caused by the fast exciters in the SGs as well as long transmission lines \cite{kundur1994power}. The PEM-MPC enables us to use the high controllability of the converter to attenuate the inter-area oscillations. We still choose $I_d^{ref}$ and $I_q^{ref}$ to be control inputs provided by the PEM-MPC, and $V_d$ and $V_q$ are two of the three measured outputs from the converter. Another output signal is the inter tie-line active power $P_{Tie-Line}$ transferred from Area 1 to Area 2, as shown in Fig.\ref{Fig_Two_Area_System}. All the outputs contain measurement noise (white noise with noise power $1.0 \times 5.0^{-6}$). We note that DeePC may not be suitable to be applied to such a high-order system because (\ref{eq:DeePC}) contains a high-dimension decision variable $g$ and thus cannot be solved in real time.

Fig.\ref{Fig_Two_Area_System_curves} plots the time-domain responses of the inter tie-line active power when different $T_{\rm ini}$ and $N$ are adopted. The sampling time for PEM-MPC is $1\rm{ms}$. Note that before $t=0\rm{s}$, white noise signals (noise power: $1.0 \times 10^{-4}$) were injected into the test system through $I_d^{ref}$ and $I_q^{ref}$, which lasted for 1 minute. During this process, the $N$-step transition matrix $K$ was calculated iteratively with measurements spaced by $20\rm{ms}$, that is, $K$ was obtained by solving the PEM problem with 3000 trajectories but in the following cases, $K$ is not adapted online when the PEM-MPC is performed.

Firstly, since the test system is of high order (with $90$ state variables) and is a black-box system to the PEM-MPC, we choose a sufficiently large $T_{\rm ini}$ ($T_{\rm ini}=200$) to meet $T_{\rm ini} \ge \ell$, and the prediction horizon is chosen as $N=80$. It can be seen that the inter-area oscillation is well eliminated after the PEM-MPC is activated at $t=10\rm{s}$. Then, considering that the low-frequency oscillations are caused by the interactions among the SGs' rotors and thus can possibly be represented by an equivalent low-order system, $T_{\rm ini}$ and $N$ are chosen smaller to test the performance of the PEM-MPC. Fig.\ref{Fig_Two_Area_System_curves} shows that when $(T_{\rm ini},N)$ are chosen to be $(10,10)$ and even $(5,10)$, the PEM-MPC can still effectively attenuate the inter tie-line power oscillations. Also observe that with the decrease of $T_{\rm ini}$ and $N$, some slight oscillations still exist after the PEM-MPC is activated, which is caused by the model mismatching, i.e., a smaller size of $K$ may not be able to accurately represent the full-order model.

\begin{figure}[!t]
	\centering
	\includegraphics[width=2.5in]{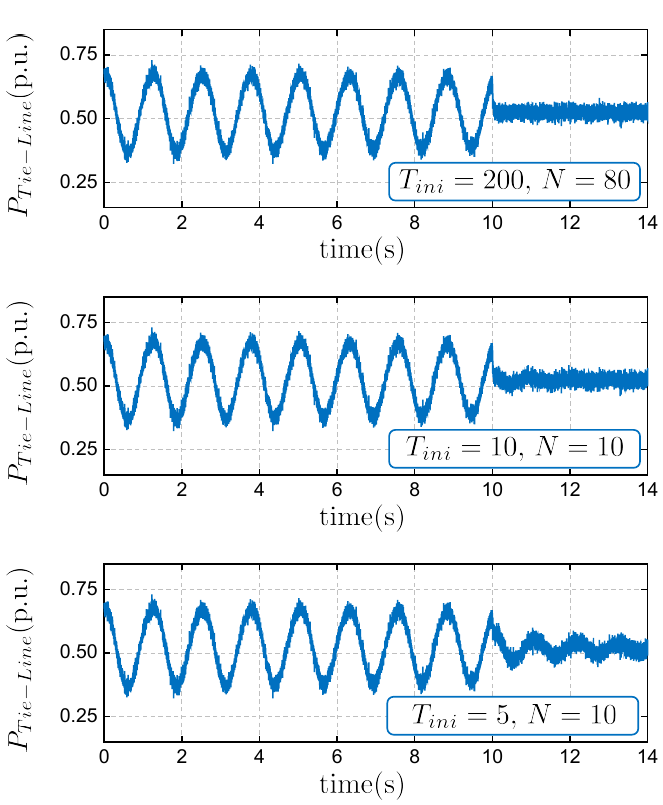}
	\caption{Time-domain responses of the inter tie-line active power. The PEM-MPC is activated at $t=10\rm{s}$.}
	\label{Fig_Two_Area_System_curves}
\end{figure}

To further illustrate how the choice of $T_{\rm ini}$ affects the overall performance, Fig.\ref{Fig_Cost_Tini} plots the variations of the time-domain cost with different values of $T_{\rm ini}$ (from $t=10\rm{s}$ to $t=14\rm{s}$). Fig.\ref{Fig_Cost_Tini} shows that the time-domain cost is significantly reduced with the increase of $T_{\rm ini}$ from $5$ to $50$, but remains nearly the same if further increasing $T_{\rm ini}$, which indicates the test system is well represented with $T_{\rm ini}=50$.

\begin{figure}[!t]
	\centering
	\includegraphics[width=2in]{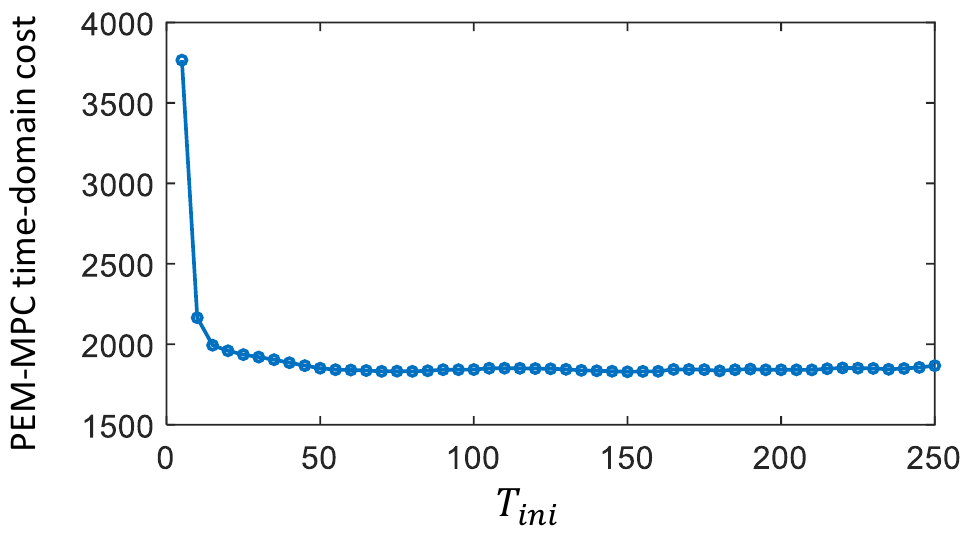}
	\caption{Variations of the time-domain cost with different $T_{\rm ini}$.}
	\label{Fig_Cost_Tini}
\end{figure}

\section{Conclusion}

We applied DeePC in grid-connected power converters to eliminate undesired oscillations caused by weak grid conditions. The DeePC has no information about the power converter or the power grid, and solely uses input/output data measured from the unknown system. We showed that the DeePC can stabilize an unstable system by performing optimal and constrained receding-horizon control. Moreover, the DeePC presents more robustness against changes in the unknown system with a longer input/output data sequence to construct the Hankel matrix, but meanwhile results in higher computational burden and poor scalability especially in high-order systems. For this reason, we presented a concatenated PEM-MPC method as an alternative model-based, optimal, and constrained receding-horizon control, wherein the PEM can be implemented in a recursive manner. We discussed the connection between DeePC and PEM-MPC, and formally showed that the DeePC outperforms PEM-MPC with regards to the cost in the optimization problem. We applied the PEM-MPC in a VSC-HVDC station which is connected to a high-order power grid that contains four SGs, and showed that the PEM-MPC effectively eliminates the low-frequency oscillations caused by the interactions among multiple SGs.

\bibliographystyle{IEEEtran}

\bibliography{ref}

\end{document}